\DeclareMathOperator*{\argmax}{arg\,max}
\newtheorem{thm}{Theorem}[section]
\newtheorem{definition}{Definition}[section]
\theoremstyle{plain}
\numberwithin{equation}{section}
\numberwithin{figure}{section}
\theoremstyle{plain}
\begin{document}

\title{Stable and extremely unequal}
\author{Alfred Galichon, Octavia Ghelfi, Marc Henry}
\address{NYU+SciencesPo, Amazon and Penn State}
\thanks{The first version is dated June 8, 2021. This version is of \today. Ghelfi's contribution reflects work done at New York University, before joining Amazon. The authors thank Laura Doval, Federico Echenique, Larry Samuelson, Ran Shorrer, Olivier Tercieux and an anonymous referee for helpful comments. Galichon acknowledges support from European Research Council grant ERC-CoG No. 866274 and NSF grant DMS-1716489. Ghelfi acknowledges support from NYU’s Henry M. MacCracken Fellowship and NSF grant DMS-1716489. The usual disclaimer applies.}

%\begin{abstract}
%
%Stable allocations are often called ``fair,'' due to the fact that stability eliminates all justifiable envy. In spite of this, we show how stability as a solution concept often comes at the cost of extreme forms on inequality. 
%Restricting our attention to aligned preferences, we show that the stable matching results from the lexicographic welfare maximization of the pairs' welfare, starting with the best-off. We compare this solution with an alternative allocation, that although unstable, maximizes the welfare lexicographically starting with the worst-off pairs. 
%
%\end{abstract}
\maketitle
\doublespacing

\section{Introduction}

In this note, we highlight a simple tension between stability and equality in matching with non transferable utilities. We consider many-to-one matchings and refer to the two sides of the market as students and schools. The latter have aligned preferences, as in \cite{NY:2009}, which in this context means that a school's utility is the sum of its students' utilities. A special case of aligned preferences, known as spatial, arises when utilities are determined by commuting distance to school. 

We show existence and uniqueness of a stable one-to-many matching, under similar assumptions to the ones used by \cite{Eeckhout}, \cite{Clark:2006} and \cite{NY:2009} to prove existence and uniqueness of an equilibrium in the one-to-one case. This stable matching can be obtained through the Deferred Acceptance Algorithm (DAA) of \cite{GS:62}. 

Stable matchings eliminate all justifiable envy, hence are sometimes called ``fair''. For instance~\cite{Kojima_Manea} point out that, due to the fact that stability is regarded as a normative fairness criterion, it is used in many practical assignment problems, such as student placement in New York City and Boston. However, we show that this fairness comes at the cost of extreme forms of inequality of allocation\footnote{The inequality discussed here is between matched pairs, and within each side of the market, not between the two sides of the market as in in \cite{GI:89} and \cite{AKL:2017}. In the latter, notions of equality and fairness relate to equalizing outcomes of both sides of the market while maintaining stability.}. In the spatial allocation case, this results in some students going to school across the street while other travel across the city.
The intuition is that students and schools that are close to each other can block any allocation that involves a pair that is further away, and peripheral or marginal students get the long end of the subway ride. 

We formalize this intuition by showing that the stable matching lexicographically maximizes the welfare of the matched pairs, starting with the best-off. We propose a simple algorithm that reflects this lexicographic ordering and makes the proof of our result transparent. We call this algorithm max-max-lex. Similarly, we propose an algorithm, adapted from the bottleneck algorithm in \cite{BDM:2009}, Section~6.2, that reverses the balance between stability and inequality, and matches pairs in lexicographic order maximizing the welfare of the worst-off. We call this algorithm max-min-lex. The resulting matching is Rawlesian at the expense of stability. 

\section{Model}
Consider a one-to-many matching problem with two sides
\emph{$\mathcal{I}$} and $\mathcal{J}$. We will call the
elements of $\mathcal{I}$ students, and the elements of
$\mathcal{J}$ schools. Let $\mathcal{J}$ be a discrete set with cardinality weakly smaller
than the cardinality of $\mathcal{I}$. Let each school $j \in \mathcal{J}$ have capacity~$q_{j}$, which is the number of students it is equipped to serve. Finally, let $u_{ij}$ be the utility of a student $i$ when matched with $j$, and similarly let $v_{jI}$ be the utility of a school $j$ when matched with a set of students $I\subseteq\mathcal{I}$. We normalize the utility of unmatched students to~$-\infty$. We assume that utilities are strictly positive, i.e.,~$u_{ij}>0$ for every $i$ and $j$; there are no indifferences, i.e.,
there are no pairs~$i,i'\in\mathcal{I}$ and~$j,j'\in \mathcal{J}$ such that~$u_{ij} = u_{i'j}$ or~$u_{ij} = u_{ij'}$,
and preferences are strictly aligned, by which we mean that for all~$j\in\mathcal J$ and~$I\subseteq\mathcal I$, 
$v_{jI} =  \sum_{i\in I} u_{ij}$.
Strictly aligned preferences are so called because they require alignment between the utilities of the two sides of the market. They are a particular type of altruistic preference. When the matching is one-to-one, the definition of strictly aligned preferences coincides with the definition of aligned preferences in \citet{NY:2009}.\footnote{The condition is related but stronger than the top coalition property in  \cite{BKS:2001} and weaker than the condition in \cite{Pycia:2012}.}

An allocation is a function $\mu:\mathcal{I}\cup \mathcal{J}\rightarrow 2^{\mathcal{I}}\cup \mathcal{J}$ such that $\mu(i)\in \mathcal{J}\cup \{i\}$ and $\mu(j)\subseteq \mathcal{I}\cup \{j\}$. The notation $\mu(i)=i$ indicates that student $i$ is unassigned, and $j \in \mu(j)$ indicates that the number of students assigned to school $j$ under $\mu$ is less than its capacity, that is $q_j>|\mu(j)\cap \mathcal{I}|$. 
An allocation is called feasible if each student is assigned to at most one school, and all school capacity constraints are respected, that is
if~$ |\mu(i)| = 1$ for all~$i \in \mathcal{I}$ and~$|\mu(j)| \leq q_{j}$ for all~$j \in \mathcal{J}.$
An allocation is stable when there are no blocking pairs. In our context, this is equivalent to the following. \begin{definition}
The allocation $\mu:\mathcal{I}\cup\mathcal{J}\rightarrow 2^{\mathcal{I}}\cup\mathcal{J} $
is stable if $\nexists\  i,j\in\mathcal{I}\times\mathcal{J}$ such that~$u_{ij} > u_{i\mu(i)}$
and~$\left[ \left[|\mu(j))|< q_{j}\right] \mbox{ or } \left[|\mu(j))|= q_{j} \mbox{ and } \exists i' \in \mu(j),\: \; u_{i'j} < u_{ij}\right]\right]$.
\end{definition}
The following algorithm will be shown to produce the unique stable matching. 
\begin{enumerate}
    \item Match Step: select $i$ and $j$ such that the utility of their match is the highest in the set of students that are unassigned and schools that have some residual capacity.
    \item Update Step: reduce the capacity of the school found in the previous step by 1. Remove the assigned student from the set of unassigned students.
\end{enumerate} 
We call this algorithm the max-max-lex algorithm\footnote{The max-max-lex algorithm is lexicographic, starting from the top. This feature is shared with rank-maximal allocations, see \citet{Irving2006}, where the number of agents receiving their first choice is maximized, subject to which a maximum number of remaining agents receive their second choice, etc...}
because it iteratively pairs the students and schools that are each other's top choice among the schools and students that are still available. It does so in a lexicographic order, until there are no further students and schools to match. The max-max-lex algorithm is formally described below. It converges in a finite number of steps. In the algorithm, we denote $e^j$ the $j$-the vector of the canonical basis of  $\mathbb{R}^{\mathcal{J}}$, which is the vector whose $j$-th entry is equal to one, and whose other entries are equal to zero. 

\begin{algorithm}%[h]
\setstretch{1.4}
\SetAlgoLined
 \vspace{2mm}
 Initialization:\\
 Set $t=0$, $I^0 = \mathcal{I}$ and $q^0 = q$\\
 \While{$I^t \neq \emptyset$ and $q^t \neq 0$}{
 $i^t,j^t = \argmax_{i,j} u_{ij}$\\
 s.t. $i\in I^t$ and $q^t_{j^t} \neq 0$ \\
 Set $\mu(i^t) = j^t$; \\
$q^{t+1} = q^t - e^{j^t}$;\\
 $I^{t+1} = I^{t} \setminus \{i^t\}$;\\
 $t=t+1$
 }
\caption{Max-max-lex Algorithm}
\end{algorithm}
\vspace{3mm}
Theorem \ref{thm} shows three important results: first, the allocation resulting from the max-max-lex algorithm is the one that maximizes the vector of students' utilities in lexicographic order from higher to lower utility pairs. Second, it proves that the allocation is stable. Finally, it shows that the stable allocation is unique, therefore implying that the resulting matching outcome of the max-max-lex algorithm is identical to the matching outcome of the DAA\footnote{\cite{CCP:2022} discuss the trade-off between (school) priorities and (student) preferences in school choice and show in particular that in the current context of aligned preferences, the stable outcome coincides with the top trading cycles algorithm of \cite{SS:74}. Hence, top trading cycles also produces high inequality in outcomes in this context.}.

\begin{thm}\label{thm} (a) The max-max-lex algorithm maximizes (among all feasible allocations) the vector of ranked ordered utilities of student-school pairs in the lexicographic order, starting from the pair with the highest utility.
(b) The assignment resulting from the max-max-lex algorithm is stable.
(c) The stable allocation is unique. 
\end{thm}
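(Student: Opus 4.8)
The plan is to prove all three parts by induction on the size of the problem, peeling off the globally best pair at each step. Write $M=\max_{i,j}u_{ij}$ and let $e_0=(i^0,j^0)$ be a pair attaining it; this is the first pair the algorithm selects. Two preliminary observations organize everything. First, because each Match Step optimizes over a set that only shrinks, the utilities $u_{i^t j^t}$ selected by the algorithm are non-increasing in $t$, so the algorithm's output is already sorted from best to worst. Second, the no-indifference assumption, while permitting two \emph{disjoint} pairs to share a utility value, forbids ties among pairs sharing a student or a school; in particular $j^0$ is $i^0$'s strictly preferred school and $i^0$ is $j^0$'s strictly preferred student, and more generally $u_{i'j^0}<M$ for every $i'\neq i^0$ and $u_{i^0 j}<M$ for every $j\neq j^0$. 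After the first step the algorithm faces exactly the residual problem obtained by deleting $i^0$ and lowering $q_{j^0}$ by one, which is the engine of every induction below.

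For (a), the crux is an exchange lemma: every feasible allocation $\mu'$ can be modified into a feasible $\mu''$ that contains $e_0$ and whose rank-ordered utility vector is lexicographically at least that of $\mu'$ (if $\mu'$ already contains $e_0$ there is nothing to do). To build $\mu''$ I would reroute $i^0$ to $j^0$: detach $i^0$ from its current school (removing an edge of weight $<M$, or none if $i^0$ was unmatched) and, if $j^0$ is already full, evict its least-preferred incumbent $i^\ast$ (removing an edge of weight $u_{i^\ast j^0}<M$), then add $e_0$. This operation deletes only sub-maximal edges and inserts one edge of the maximal weight $M$, so $\mu''$ retains all the weight-$M$ edges of $\mu'$ and has one more; hence $\mu''\succeq_{\mathrm{lex}}\mu'$ even though $\mu''$ may contain fewer edges overall, since the first coordinate where the two vectors differ favors $\mu''$. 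As $\mu''$ contains $e_0$, the remainder $\mu''\setminus\{e_0\}$ is feasible for the residual problem; by the induction hypothesis the algorithm's continuation (which is exactly the algorithm run on the residual problem) is lexicographically optimal there, and prepending the common top entry $M$ shows that the algorithm's allocation is lexicographically at least $\mu''$, hence at least $\mu'$. Verifying that the reshuffle never lowers the lexicographic value — while keeping track of capacities, of possibly unmatched students, and of disjoint utility ties — is the main technical obstacle; the remaining steps are bookkeeping.

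For (b), I would argue by contradiction from a blocking pair $(i,j)$ and inspect the decisive step of the algorithm. Let $i$ be assigned at step $t$, so $u_{i\mu(i)}$ equals the maximal available utility $w_t$ at that step. If $j$ has a free seat in the final allocation, then it had one at step $t$ as well (capacities only decrease), so $(i,j)$ was available and the argmax forces $u_{i\mu(i)}\ge u_{ij}$, contradicting $u_{ij}>u_{i\mu(i)}$. If instead $j$ is full and is dominated by $i$ through an incumbent $i'\in\mu(j)$ with $u_{i'j}<u_{ij}$, assigned at step $s$, I split on the order of $s$ and $t$: if $i'$ joined $j$ after $i$ was matched, then $j$ still had a free seat at step $t$ and the previous contradiction reappears; if $i'$ joined $j$ before $i$ was matched, then at step $s$ the student $i$ was still unassigned while $j$ had a free seat, so $(i,j)$ was available at step $s$ and the argmax gives $u_{i'j}\ge u_{ij}$, contradicting $u_{ij}>u_{i'j}$. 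This exhausts the cases, so no blocking pair exists.

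For (c), existence being supplied by (b), it remains to show that any stable allocation $\mu$ coincides with the algorithm's, which I again do by induction via $e_0$. First, $\mu(i^0)=j^0$ necessarily: otherwise $i^0$ strictly prefers $j^0$ to its assignment, and either $j^0$ has a free seat or, being full, contains some $i'\neq i^0$ with $u_{i'j^0}<M=u_{i^0j^0}$, so $(i^0,j^0)$ blocks $\mu$. Second, once $e_0$ is fixed, the restriction of $\mu$ to the residual problem is itself stable there, since any residual blocking pair would already block $\mu$ (the only changes are the removal of $i^0$ and of one seat at $j^0$, neither of which can create a new blocking opportunity). By the induction hypothesis the residual stable allocation is unique and equals the algorithm's continuation, so $\mu$ equals the algorithm's allocation; combined with (b), this allocation is the unique stable one.
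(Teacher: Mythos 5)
Your proposal is correct, and it is organized quite differently from the paper's proof. The paper does not induct: for (a) it asserts directly that the $n$-th iteration of the algorithm maximizes the $n$-th highest order statistic over the set of achievable utility vectors, without justifying why committing to the globally best pair costs nothing lexicographically — which is exactly the content of your exchange lemma, so your argument supplies the detail the paper elides. For (b) the paper gives a two-line contradiction (``the algorithm would have matched $i$ and $j$ before $i'$ and $j$''), whereas you carry out the full case analysis on whether $j$ has residual capacity and on the relative timing of the assignments of $i$ and $i'$; the one case you skip is a blocking student $i$ who is never assigned by the algorithm (so $u_{i\mu(i)}=-\infty$), but your ``$i'$ joined $j$ while $i$ was still unassigned'' branch covers it verbatim, and the free-seat branch is vacuous there because the while-loop cannot terminate with both an unassigned student and a school with residual capacity. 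For (c) the paper fixes a student $i$ with $\mu^{S}(i)\neq\mu^{MML}(i)$ and splits on the sign of $u_{i\mu^{MML}(i)}-u_{i\mu^{S}(i)}$, deriving a blocking pair in each case; your route instead shows that stability forces the top pair $e_0$ into every stable allocation, checks that the restriction to the residual problem remains stable, and recurses. The two (c) arguments are genuinely different: the paper's is shorter but its first case quietly assumes that the students who crowd $i$ out of $\mu^{S}(i)$ under max-max-lex would themselves want to deviate to $\mu^{S}(i)$ under $\mu^{S}$, which really needs the kind of inductive bookkeeping you make explicit. Your unified peel-off-the-top induction is longer but makes all three parts rest on the single observation that the continuation of the algorithm is the algorithm on the residual problem, which is arguably the cleaner foundation.
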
 \begin{proof}

(a) Let $\mathcal{U}\subseteq \mathbb{R}^{|\mathcal{I}|}$ represent the set of utilities that are achievable in the economy in a feasible allocation. Formally, let $u=(u_i)_{i\in \mathcal{I}}$ be a vector in $\mathbb{R}^{|\mathcal{I}|}$. If $u\in \mathcal{U}$ then there exists a feasible allocation $\mu$ such that $u_{i\mu(i)} = u_i$. Let $u^{(k)}$ represent the k-th order statistic of vector $u$, with $u^{(|\mathcal{I}|)}$ being the highest component of vector~$u$, and $u^{(1)}$ being its smallest. The first iteration of the max-max-lex algorithm selects among the vectors in $\mathcal{U}$ the ones with the highest value of $u^{(|\mathcal{I}|)}$. The n-th iteration of the max-max-lex algorithm selects among the vectors selected at the previous step, the ones with the highest value of $u^{(|\mathcal{I}|-n)}$, and so on. Therefore, the max-max-lex algorithm maximizes lexicographically the utility of students, starting from the pairs with the highest utility. 

(b) Let $\mu^{MML}$ be the match resulting from the max-max-lex algorithm, and assume by contradiction that it is unstable. This means that there exists $i$ and $j$ such that $u_{ij}>u_{i\mu(i)}$ and for some $i'\in \mu(j)$, $u_{ij}>u_{i'j}$. However, this implies that the max-max-lex algorithm would have matched $i$ and $j$, before matching $i'$ and $j$, which leads to a contradiction. 

(c) Let $\mu^{S}$ be a stable match and let $\mu^{MML}$ be the stable match arising from  the max-max-lex algorithm. Suppose by contradiction that $\mu^{S}\neq \mu^{MML}$. This means that there exists $i\in\mathcal{I}$ such that $\mu^{S}(i)\neq \mu^{MML}(i)$. Since by Assumption 2 there are no indifferences, it must be that either (a) $u_{i\mu^{MML}(i)}<u_{i\mu^{S}(i)}$ or (b) $u_{i\mu^{MML}(i)}>u_{i\mu^{S}(i)}$. First suppose that (a) holds. Since $i$ and $\mu^{S}(i)$ are not assigned through the max-max-lex algorithm, it must be that at the stage of the algorithm when $i$ is assigned, school $\mu^{S}(i)$ is already at full capacity. This implies that $\exists I\subseteq \mathcal{I} \mbox{ s.t. } |I|\geq q_{\mu^{S}(i)} \mbox{ and } \min_{i'\in I} u_{i'\mu^{S}(i)}>u_{i\mu^{S}(i)}$. But this implies that any $i'\in I$ would form a blocking pair with $\mu^{S}(i)$ in $\mu^{S}$. This contradicts that $\mu(s)$ is stable. Suppose then that (b) holds, i.e., $u_{i\mu^{MML}(i)}>u_{i\mu^{S}(i)}$. This implies that $\nexists I\subseteq\mathcal{I} \mbox{ s.t. } |I| \geq q_j \mbox{ and } \min_{i'\in I} u_{i'\mu^{MML}(i)}>u_{i\mu^{MML}(i)}$. But then $(i,\mu^{MML}(i))$ form a blocking pair in $\mu^{S}$, which is a contradiction. Therefore $\mu^{S} = \mu^{MML}$.
\end{proof}

An illustration of the severe inequality displayed by the stable allocation in matching with aligned preferences is given in Figure~2.1(a). The latter shows the stable matching between a large number of students uniformly distributed on~$[0,1]^2$ and~$5$ distinct schools in~$[0,1]^2$ with heterogeneous capacities. Utilities are spatial, i.e.,~$u_{ij}=\sqrt{2}-d_{ij}$, where~$d_{ij}$ denotes Euclidean distance between~$i$ and~$j$. 
%For illustrative purposes, Figure~2.1 actually represents the limit allocation when~$\mathcal I=[0,1]^2$. See \cite{HHP:2006} for details. 
Dots in the figure represent schools, and territories of the same color represent students who attend the same school. One characteristic of this assignment is that all schools lie in the territory that they serve. As one can see from the figure, some students in the red territory have to travel almost the maximum distance that can be traveled in the square, while others travel no distance at all. This results in very dispersed utilities in the stable allocation.

The lexicographic nature of the stable allocation suggests a Rawlesian alternative, where pairs are matched in lexicographic order, starting with the lowest utility pair within a set that is iteratively determined. The corresponding algorithm we propose below is adapted from the bottleneck algorithm in, for instance, \cite{BDM:2009}. The algorithm is made of three steps: 
\begin{enumerate}
    \item Feasibility Step: among the feasible allocations of unassigned students and schools, select one that maximizes the lowest utility $u^\star$ obtained by anyone in the allocation. 
    \item Match Step: match all $i$ and $j$'s in the feasible allocation identified at the previous step that obtain utility $u^\star$ from their match.
    \item Update Step: reduce the capacity of the school found in the previous step by 1. Remove the assigned student from the set of unassigned students.
\end{enumerate} 
This algorithm converges in finite time and produces an allocation that maximizes the utility of the worst-off student, by maximizing at each step the utility of the worst-off students among those remaining. We thus call this algorithm max-min-lex. 
% In the formal description of the algorithm below, we let $\mathcal{U} = (u_{ij})_{i\in\mathcal{I},j\in\mathcal{J}}$. Recall that the utility for an unassigned student is $-\infty$. 

\begin{algorithm}[h]
\setstretch{1.4}
\SetAlgoLined
 \vskip10pt
 Initialization:\\
 Set $t=0$, $I^0 = \mathcal{I}$, $q^0 = q$\\
 \While{$I^t \neq \emptyset$ and $q^t \neq 0$}{ $U^t=\{(u_{ij}): {i\in I^t, \;j:q_j^t\neq 0}\}$ \\
 By dichotomy, find the largest $u^\star$ in $U^t$ such that there exists a feasible match with no assigned student obtaining a utility below $u^{\star}$; \\
 Match $i^t,j^t$ such that $u_{i^tj^t}=u^\star$\\
 Set $I^{t+1} = I^{t} \setminus \{i^t\}$;\\
 $q^{t+1} = q^{t}-e^{j^t}$;\\
 $t=t+1$
}
 \vskip10pt
 \caption{Max-min-lex Algorithm}
\end{algorithm}
\vspace{3mm}

The equalitarian nature of max-min-lex allocations comes at the expense of stability. This is straightforward, given the uniqueness of the stable allocation. It also stems from the logic of the max-min-lex algorithm, which creates blocking pairs. It is most easily seen in a~$2$ students,~$2$ schools example, with~$u_{ij}>u_{ij'}>u_{i'j}>u_{i'j'}$. In this case, the max-max-lex algorithm matches~$(i,j)$ and~$(i'j')$, whereas the max-min-lex algorithm matches~$(i,j')$ and~$(i'j)$, thereby decreasing inequality but creating a blocking pair.

An illustration\footnote{See \cite{GV:2023} for details on the algorithm and implementation used to produce Figure~2.1.} of the max-min-lex allocation is given in Figure~2.1(b). The primitives are identical to those in Figure~2.1(a) but the allocation no longer displays the signs of extreme outcome inequality in Figure~2.1(a).

\begin{figure}%[h]
\label{Fair_allocation}
\begin{center}
\subfigure[Max-max-lex]{\includegraphics[scale=0.33]{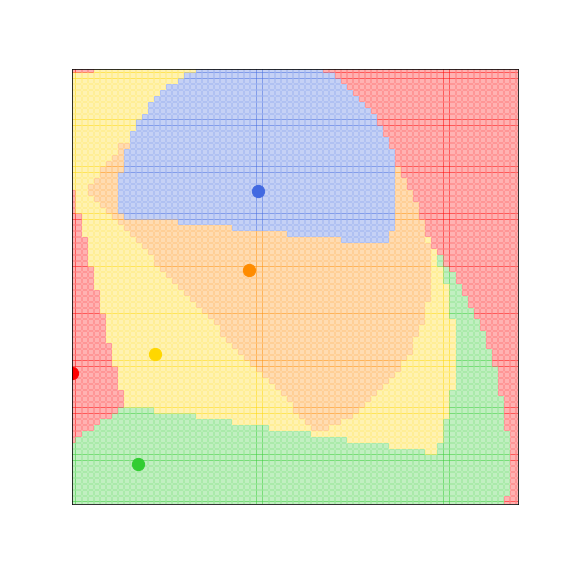}}
\subfigure[Max-min-lex]{\includegraphics[scale=0.33]{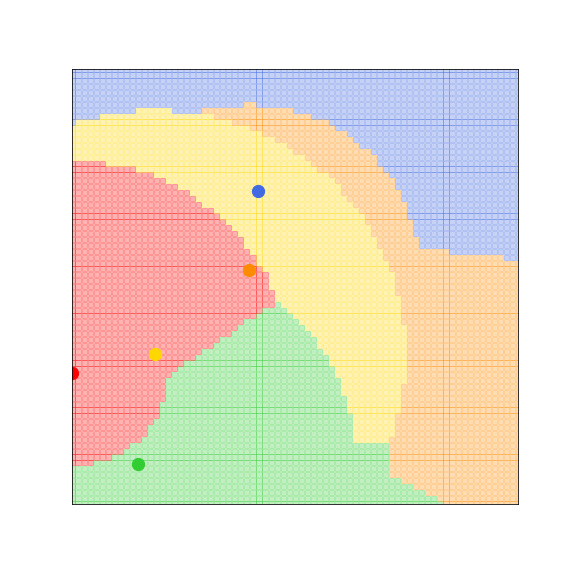}}
\end{center}
\caption{Allocations of students uniformly distributed on the unit square and~$5$ schools, represented by dots. Each dot serves the students in the territory of the corresponding color. The left panel shows the stable allocation, whereas the right panel shows the allocation resulting from the max-min-lex algorithm.}
\end{figure}

%\section{Conclusion}
%In conclusion, this paper shows that under certain conditions on preferences, the stable allocation leads to extreme inequality between the agents in the economy. Focusing on the case of aligned preferences, we make our result transparent by showing how the unique stable allocation is obtained by lexicographically matching pairs, starting from the best-off in the economy. We provide a visual representation of the inequality using the particular case of spatial preferences. Finally, we contrast with a solution that reverses the logic of the stable allocation, by lexicographically maximizing the welfare of the worst-off in the economy, at the expense of stability. 

%\nocite{*}
\bibliographystyle{chicago}
\bibliography{AntiRawles}

\end{document}